\newtheorem{theorem}{Theorem}
\newtheorem{corollary}{Corollary}
\newtheorem{lemma}{Lemma}
\newtheorem{definition}{Definition}
\newcommand{\age}{\Delta}
\newcommand{\negfigspace}{\vspace{-3mm}}
\newcommand{\cf}{c}
\newcommand{\eos}[2]{\mu_{#1:#2}}
\newcommand{\varos}[2]{\sigma^2_{#1:#2}}
\newcommand{\agep}{\Delta_P}
\newcommand{\agee}{\Delta_E}
\begin{document}
\title{Multicast With Prioritized Delivery: \nn
How Fresh is Your Data?}
\author{Jing Zhong, Roy D.~Yates and Emina Soljanin\\
 \small 
Department of ECE, Rutgers University, \{jing.zhong, ryates, emina.soljanin\}@rutgers.edu}

\makeatletter
\def\@copyrightspace{\relax}
\makeatother

\maketitle

\begin{abstract}
We consider a multicast network in which real-time status updates generated by a source are replicated and sent to multiple interested receiving nodes through independent links.
The receiving nodes are divided into two groups: one priority group consists of $k$ nodes that require the reception of every update packet, the other non-priority group consists of all other nodes without the delivery requirement.
Using age of information as a freshness metric, we analyze the time-averaged age at both priority  and non-priority nodes.  
For shifted-exponential link delay  distributions, the average age at a priority node is lower than that at a non-priority node due to the delivery guarantee. 
However, this advantage for priority nodes disappears if the link delay is exponential distributed.
Both groups of nodes have the same time-averaged age, which implies that the guaranteed delivery of updates has no effect the time-averaged freshness.

\end{abstract}

\section{Introduction} 
The analysis of information freshness arises from a variety of real-time status updating systems, in which update messages generated by the sources are sent to interested receivers through a communication system. 
For instance, the real-time information updates of autonomous cars are broadcast to nearby vehicles and infrastructures.
Similarly, live video captured for remote surgery is required to be available at the doctor with ultra low delay. 
In these systems, the knowledge of the source state at the receiver is desired to be as fresh as possible. 
This leads to the introduction and analysis of an ``Age of Information" (AoI) freshness metric \cite{Kaul2012infocom,Kam2013, Costa2014,Sun2016,Najm2016,Bedewy2016,Kadota2016,Yates2017}.
Age of information, or simply \emph{age}, measures the time difference between now and when the most recent update was generated.
At any time $t$, if the most recent update at the receiver is generated at time $u(t)$, then the instantaneous age at the receiver is $t-u(t)$.

In early work on age analysis \cite{Kaul2012infocom}, it was shown that the source should limit its update rate in order to avoid queueing delay caused by overloading the system with first-come first-served (FCFS) policy.
Given the observation of unnecessary waiting in FCFS systems, subsequent research looked at last-come first-served (LCFS) queueing systems that discard older updates as soon as a new update comes \cite{Costa2014,Najm2016}, and last-generated first-server (LGFS) policy with preemption in service for multihop networks \cite{Bedewy2016}. 
When the source has no knowledge of the service system state, allowing packet preemption at the queue provides lower average age in general.
However, these results are limited to the case where the update arrival process is given.
In \cite{Sun2016}, the authors consider a different scenario where the system state is available at the source such that a new update is generated only after the service of the previous update is completed. 
A lazy updating scheme is proved to be age-optimal, indicating that the source should wait for a short period before sending a new update if the service time of the previous update is too small.
The analysis in \cite{Sun2016} applies only to systems without preemption.

In this work, we consider an update multicast system in which real-time update messages generated by the source are broadcast to a set of nodes through i.i.d. links with random network delays.
The receiving nodes are categorized into two groups. The \emph{priority} group consists of nodes that require the delivery of every update, while all other nodes without the delivery requirement are regarded as the \emph{non-priority} group.
Once a node receives an entire update message, it acknowledges the source by sending  instantaneous feedback.
This model arises in a variety of delay-sensitive applications, e.g. vehicle networks where the update messages are popular and simultaneously request by large numbers of users.
Some receiving nodes require the history of all updates for the purpose of data aggregation and processing; thus the delivery of every update message is crucial.  

Our work is closely related to the LCFS and LGFS systems with preemption and state-dependent updating. 
Since any new update generated by the source leads to the termination of the previous update, each link is equivalent to a LCFS queue with preemption in service. 
Thus, the instantaneous feedback enables the source to submit new updates based on the state of the queue, either replacing staled updates with a fresh update or waiting for the service of the current update to be completed.
We assume each update packet is divided into small chunks and encoded with a rateless code to overcome channel erasure in the multicast network. 
In this case, the number of chunks corresponding to an update is required to  reach a certain minimum level for the update to be successfully decoded.
Moreover, the source is also able to instantaneously terminate an update in the middle of transmission.
Here we consider a simple updating scheme that exploits instantaneous feedback.
Once the current update is delivered to all the nodes in the priority group, the source terminates the transmission of the current update and broadcasts a fresh update. 
Our goal here is to evaluate the average age for both types of nodes.


\section{Problem Formulation}\label{sec:system}
\begin{figure}[t]
\centering\small
\begin{tikzpicture}[node distance=1cm]
\node [draw,circle, rounded corners,align=center, thick] (newsource) {Source};
\node[draw,rectangle,rounded corners,align=center,minimum height=4mm,thick, fill=lightgray] (node_2)[right = 4 of newsource.north] {Node 2};
\node[draw,rectangle,rounded corners,align=center,minimum height=4mm,thick, fill=lightgray] (node_1)[above = 0.5 of node_2.west] {Node 1};
\node[draw,rectangle,rounded corners,align=center,minimum height=4mm,thick, fill=lightgray] (node_3)[below = 1 of node_2.east] {Node $k$};
\node[draw,rectangle,rounded corners,align=center,minimum height=4mm,thick] (node_n)[below = 0.6 of node_3.west] {Node $k$+1};
\draw  [->,thick] (newsource.east) node[draw,rectangle,minimum height=3mm,thin,opacity=1] [above right = 0.5 and 0]{$j$+1} .. controls ++(1,0.5) .. (node_1.west) node[draw,rectangle,minimum height=3mm,thin] [above left = 0.1 and 0.2]{$j$};
\draw[->,thick] (newsource.east) .. controls ++(2,-0.5) .. (node_3.west) node[draw,rectangle,minimum height=3mm,thin,opacity=1] [above left = 0.1 and 0.5]{$j$};
\draw[->,thick] (newsource.east) .. controls ++(1,0.2) .. (node_2.west) node[draw,rectangle,minimum height=3mm,thin,opacity=1] [above left = 0 and 0.6]{$j$}; 
\draw[->,thick] (newsource.east) .. controls ++(1.2,-0.8) .. (node_n.west) node[draw,rectangle,minimum height=2mm,thin] [above left = 0.3 and 0.3]{$j$};
\path (node_3.north) -- node {$\vdots$} ++(-0.5,1);

\end{tikzpicture}
\caption{System diagram: source broadcasts status updates to $n$ nodes through i.i.d. channels. The $k$ nodes in the priority group are shaded. The transmission of update $j+1$ is initiated only after update $j$ is delivered to all $k$ nodes in the group.}
\label{fig:sysmodel}
\negfigspace
\end{figure}
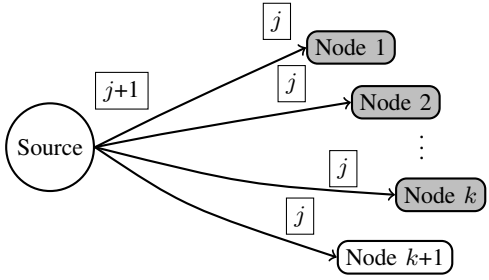

We consider a status updating system with a single source broadcasting time-stamped updates to multiple nodes through independent links with random delays, as shown in Fig.~\ref{fig:sysmodel}.
Each update message $j$ is time-stamped when it is generated at the source, and it takes time $X_{ij}$ to be successfully delivered to node $i$.
The priority group consists of  nodes $1,\ldots,k$, and the source guarantees the delivery of every update to all of these priority  nodes. 
In this work, we assume there is an instantaneous feedback channel from every node $i$ back to the source, and node $i$ acknowledges the source instantly as soon as the update is delivered to the node $i$. 
When all $k$ nodes in the priority group report receiving the update $j$, this update is considered completed and the transmissions of this update to all other nodes are terminated.
The source immediately generates the next update $j+1$ and repeats the multicast process. 


When most recently received update at time $t$ at node $i$ is time-stamped at time $u_i(t)$, the status update age or simply the {\it age}, is the random process $\age_i(t)=t-u_i(t)$. 
When an update reaches node $i$, $u_i(t)$ is advanced to the timestamp of the new update message. 
The time average of age process at a node, which is also called the age of information, is defined as 
\begin{align}
\age_i = \lim_{\tau\to\infty} \frac{1}{\tau} \int_{0}^{\tau} \age_i(t).
\end{align}

In this work, we derive the average age at both the priority node and the non-priority nodes, and we will show that the average age depends on the \emph{order statistics} of the random link delay $X_{ij}$.
\begin{definition} The $k$-th order statistic of random variables $X_1, \ldots, X_n$, denoted $X_{k:n}$, is  the $k$-th smallest variable. 
\end{definition}

For shifted exponential $X$ with CDF $F_X(x) = 1-e^{-\lambda(x-c)}$, $X_{k:n}$ has expectation and variance 
\begin{subequations}\label{os_both}
\begin{align}
	\eos{k}{n} = \E{X_{k:n}} & = \cf+\frac{1}{\lambda}(H_n-H_{n-k}), \label{os_first} \\
	\varos{k}{n} = \Var{X_{k:n}} & = \frac{1}{\lambda^2}\left(H_{n^2}-H_{(n-k)^2}\right),\label{os_var}
\end{align}
\end{subequations}
where $H_n$ and $H_{n^2}$ are the generalized harmonic numbers defined as $H_n = \sum_{j=1}^{n}\frac{1}{j}$ and $H_{n^2} = \sum_{j=1}^{n}\frac{1}{j^2}$.

\section{Priority Nodes} \label{sec:priority}

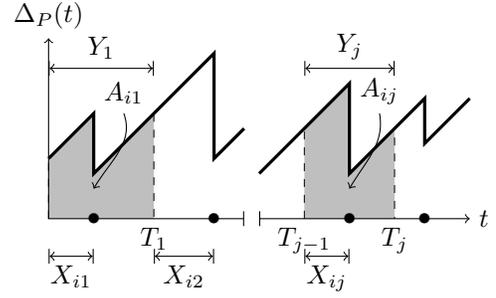
\begin{figure}[t]
\centering
\begin{tikzpicture}[scale=0.2]
\draw [fill=lightgray, ultra thin, dashed] (0,0) to (0,4) to (3,7) to (3,3) to (7,7) to (7,0);
\draw [fill=lightgray, ultra thin, dashed] (17,0) to (17,6) to (20,9) to (20,3) to (23,6) to (23,0);
\draw [<-|] (0,12) node [above] {$\agep(t)$} -- (0,0) -- (13,0);
\draw [|->] (14,0) -- (28,0) node [right] {$t$};
\draw 
(3,0) node {$\bullet$}  
(7,0) node [below] {$T_1$}
(11,0) node {$\bullet$}
(17,0) node [below] {$T_{j-1}$} 
(20,0) node {$\bullet$} 
(23,0) node [below] {$T_{j}$}
(25,0) node {$\bullet$};
\draw  [<-] (3,2) to [out=60,in=290] (5,7) node [above] {$A_{i1}$};
\draw[<-] (20,2) to [out=60,in=280] (22,7) node [above] {$A_{ij}$};
\draw [very thick] (0,4) -- (3,7) -- (3,3)  -- (7,7) 
-- (11,11)-- (11,4) --(13,6);
\draw [very thick] (14,3) -- (20,9) -- (20,3)  -- (25,8) -- (25,5) -- (28,8); 
\draw  [|<->|] (0,10) to node [above] {$Y_{1}$} (7,10);
\draw  [|<->|] (17,10) to node [above] {$Y_{j}$} (23,10);
\draw  [|<->|] (0,-2.5) to node [below] {$X_{i1}$} (3,-2.5);
\draw  [|<->|] (7,-2.5) to node [below] {$X_{i2}$} (11,-2.5);
\draw  [|<->|] (17,-2.5) to node [below] {$X_{ij}$} (20,-2.5);
\end{tikzpicture}
\caption{Sample path of the age $\agep(t)$ for node $i$ within the priority group with $k$ nodes. Update delivery instances are marked by $\bullet$.}
\label{fig:priority}\negfigspace
\end{figure}  

We start by evaluating the average age at a single node in the priority group.
Fig.~\ref{fig:priority} depicts a sample path of the age over time at some node $i$ in priority group with $k$ nodes.
Update $1$ begins transmission at time $t=0$ and is timestamped $T_0=0$. 
Here we remark that $X_{ij}$ is the \emph{service time} to deliver the update $j$ to node $i$.
Since the $X_{ij}$ are i.i.d. for all $i$ and $j$, the $\age_i(t)$ processes are statistically identical and each node $i$ has the same average age $\age_i$.
If one node gets an update earlier than any of the other $k-1$ nodes, it has to wait for an idle period until that update is delivered to all $k$ priority nodes.
The transmission time of an update $j$ to all $k$ nodes, which we call a \emph{service interval}, is given by
\begin{align}\eqnlabel{EY}
Y_j&=\max(X_{1j}, \ldots, X_{kj})=X_{k:k}.
\end{align}
We denote that update $j$ goes into service at time $T_{j-1}$ and gets delivered to all $k$ nodes at time $T_j = T_{j-1} + Y_j$.
Using similar techniques as in \cite{Zhong2017allerton}, we represent the area under the age sawtooth as the concatenation of the polygons $A_{i1},\ldots,A_{ij}$,  thus the average age is
\begin{align}\label{DeltaP}
\agep = \frac{\lim_{J\to\infty}\frac{1}{J} \sum_{j=1}^{J} A_{ij}}{\lim_{J\to\infty}\frac{1}{J}\sum_{j=1}^{J} Y_{ij}} = \frac{\E{A}}{\E{Y}}.
\end{align}
It follows from Fig.~\ref{fig:priority} that
\begin{align}
A_{ij}&=Y_{j-1}X_{ij} +X_{ij}^2/2 +X_{ij}(Y_j-X_{ij})+(Y_j-X_{ij})^2/2\nn
&=Y_{j-1}X_{ij}+Y_j^2/2.
\end{align}
Since $X_{ij}$ is independent of the transmission time $Y_{j-1}$ of the previous update, 
\begin{align}\label{EA}
\E{A}=\E{Y}\E{X}+\E{Y^2}/2.
\end{align} 
Denoting $\mu = \E{X}$, (\ref{os_both}), (\ref{DeltaP}) and (\ref{EA}) yield the next theorem.
\begin{theorem} 
\label{thm:age_p}
The average age at an individual node in the priority group is
\begin{align*}
\agep & = \mu + \frac{\eos{k}{k}}{2} + \frac{\varos{k}{k}}{2\eos{k}{k}}.
\end{align*}
\end{theorem}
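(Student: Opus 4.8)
The plan is to start directly from the renewal-reward ratio in~(\ref{DeltaP}), since the geometric decomposition of the sawtooth area and the independence argument have already delivered the compact expression for $\E{A}$ in~(\ref{EA}). First I would divide~(\ref{EA}) by $\E{Y}$ to obtain
\begin{align*}
\agep = \frac{\E{A}}{\E{Y}} = \E{X} + \frac{\E{Y^2}}{2\E{Y}},
\end{align*}
which already isolates the only moment, $\E{Y^2}$, that is not yet expressed in terms of the order-statistic quantities of~(\ref{os_both}).

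The second step is to identify the service interval with an order statistic. Since $Y = X_{k:k} = \max(X_1,\ldots,X_k)$, its first moment is exactly $\E{Y} = \eos{k}{k}$, and its second moment follows from the variance-mean decomposition $\E{Y^2} = \Var{Y} + (\E{Y})^2 = \varos{k}{k} + \eos{k}{k}^2$. Substituting these and writing $\mu = \E{X}$ yields
\begin{align*}
\agep = \mu + \frac{\varos{k}{k} + \eos{k}{k}^2}{2\eos{k}{k}}.
\end{align*}

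Finally I would split the fraction, cancelling one factor of $\eos{k}{k}$ in the $\eos{k}{k}^2$ term, to reach the claimed form $\agep = \mu + \eos{k}{k}/2 + \varos{k}{k}/(2\eos{k}{k})$; the explicit shifted-exponential values from~(\ref{os_both}) can then be inserted if a closed form in $\lambda$ and $\cf$ is desired. There is no genuine obstacle here, as the substantive work lives in the pre-theorem derivation of~(\ref{EA}). The only non-mechanical step is recognizing that the raw second moment $\E{Y^2}$ must be re-expressed through the variance so that the order-statistic formulas apply; everything else is direct substitution and algebraic simplification.
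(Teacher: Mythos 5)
Your proposal is correct and follows essentially the same route as the paper, which obtains the theorem directly by combining~(\ref{DeltaP}) and~(\ref{EA}) with the identifications $\E{Y}=\eos{k}{k}$ and $\E{Y^2}=\varos{k}{k}+\eos{k}{k}^2$. The only substantive step, re-expressing $\E{Y^2}$ via the variance so the order-statistic notation applies, is exactly what the paper does implicitly, and you correctly note that the result holds for general $X$ with~(\ref{os_both}) needed only for the shifted-exponential specialization.
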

Note that Theorem \ref{thm:age_p} is valid for any distribution of $X$. In terms of the Euler-Mascheroni constant $\gamma\approx0.577$, we also have the following result.

\begin{corollary}\label{cor:harmonics}
For shifted exponential $(\lambda,c)$ service time $X$, the average age at an individual node in the priority group is lower bounded by
\begin{align}
\agep \geq \frac{3c}{2} + \frac{1}{\lambda} + \frac{\log k+\gamma}{2\lambda}, \label{eqn:lowage_p}
\end{align}  \label{thm:lowage_p}
\end{corollary}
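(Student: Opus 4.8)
The plan is to begin from the exact expression for $\agep$ given in Theorem~\ref{thm:age_p}, specialize each quantity to the shifted exponential law, and then obtain the bound by discarding the nonnegative variance term and estimating the resulting harmonic number.

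First I would specialize the three ingredients of Theorem~\ref{thm:age_p}. Since $X$ is shifted exponential $(\lambda,c)$, its mean is $\mu=\E{X}=c+1/\lambda$. Setting $n=k$ in the order-statistic moments~\eqref{os_both} and using $H_0=0$, the maximum $X_{k:k}$ has mean $\eos{k}{k}=c+H_k/\lambda$ and variance $\varos{k}{k}=H_{k^2}/\lambda^2$.

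Next I would substitute these into Theorem~\ref{thm:age_p}. The mean terms combine as $\mu+\eos{k}{k}/2 = 3c/2 + 1/\lambda + H_k/(2\lambda)$, so that
\[
\agep = \frac{3c}{2}+\frac{1}{\lambda}+\frac{H_k}{2\lambda}+\frac{H_{k^2}}{2\lambda^2\,\eos{k}{k}}.
\]
Because $H_{k^2}>0$ and $\eos{k}{k}>0$ for every $k\ge1$, the final term is strictly positive, so dropping it yields
\[
\agep \ge \frac{3c}{2}+\frac{1}{\lambda}+\frac{H_k}{2\lambda}.
\]

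The only genuine inequality is the last step, bounding the harmonic number. I would invoke the standard estimate $H_k\ge \log k+\gamma$, which follows from the expansion $H_k=\log k+\gamma+\tfrac{1}{2k}-\tfrac{1}{12k^2}+\cdots$ whose correction is positive for all $k\ge1$. Replacing $H_k$ by $\log k+\gamma$ in the previous display gives exactly~\eqref{eqn:lowage_p}. I do not anticipate any real obstacle: the argument is a direct substitution, the removal of one manifestly positive term, and a single textbook bound on $H_k$; the only subtlety is bookkeeping, keeping the shift $c$ and the two kinds of harmonic numbers $H_k$ and $H_{k^2}$ straight.
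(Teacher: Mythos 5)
Your proposal is correct and follows essentially the same route as the paper: substitute the shifted-exponential order-statistic moments into Theorem~\ref{thm:age_p} to obtain $\agep = \tfrac{3c}{2}+\tfrac{1}{\lambda}+\tfrac{H_k}{2\lambda}+\tfrac{H_{k^2}}{2\lambda^2 c+2\lambda H_k}$, drop the final nonnegative term, and apply $H_k\ge\log k+\gamma$. The only cosmetic difference is that the paper justifies discarding the last term by noting it decreases to $0$ as $k\to\infty$, while you simply observe it is positive, which is the cleaner statement of the same step.
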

Proof appears in the Appendix. Corollary \ref{thm:lowage_p} indicates that the average age in the priority group $\agep$ is independent of the number of nodes $n$ in the system, and it behaves almost like a logarithmic function as the number of nodes $k$ increases. 

\section{Non-priority Nodes} \label{sec:nonp}

\begin{figure}[t]
\centering
\begin{tikzpicture}[scale=0.1]
\draw [<-|] (0,37) node [above] {$\agee(t)$} -- (0,0) -- (15,0);
\draw [|->] (18,0) -- (80,0) node [right] {$t$};
\draw (0,0) (11,0) -- +(0,-1) (30,0) -- +(0,-1) (40,0) -- +(0,-1) (48,0) -- +(0,-1) (60,0) -- +(0,-1) (70,0) -- +(0,-1) ;
\fill[lightgray] (30,0) to ++(10,10) to ++(25,25) to ++(0,-30)  to ++(-5,-5);
\draw (7,0) node {$\bullet$}
(25,0) node {$\bullet$}
(38,0) node {$\bullet$}
(65,0) node {$\bullet$};
\draw (11,-6) node [above] {$T_1$} 
(30,-6) node [above] {$T_{j-1}$} (40,-6) node [above] {$T_{j}$} (48,-6) node [above] {$T_{j+1}$} (60,-6) node [above] {$T_{j+2}$} (70,-6) node [above] {$T_{j+3}$};
\draw [very thick] (0,10) -- ++(7,7) -- ++(0,-10) -- ++(8,8);
\draw [very thick]  (18,18) -- ++(7,7)  -- ++(0,-20) -- ++(7,7) -- (35,15) -- ++(3,3) -- ++(0,-10) -- ++(2,2) -- ++(25,25) -- ++(0,-30) -- ++(5,5);
\draw  [|<->|] (30,7) to node [below] {$Y_{j}$} (40,7);
\draw  [|<->|] (40,7) to node [below] {$Y_{j+1}$} (48,7);
\draw  [|<->|] (48,7) to node [below] {$Y_{j+2}$} (60,7);
\draw  [|<->|] (30,-7) to node [below] {$W_{l}$} (60,-7);
\draw [|<-] (60,20) to (57,20);
\draw [|<-] (65,20) to (68,20) node [right] {$\Xtil$};
\draw  [|<->|] (72,0) to node [right] {$\Xtil$} (72,5);
\draw[<-] (55,20) to [out=60,in=280] (55,30) node [above] {$A_{l}$};
\end{tikzpicture}
\caption{\small Sample path of the age $\agee(t)$ in the non-priority group: successful update deliveries (at times marked by $\bullet$) occur in intervals $1$, $j-1$, $j$, and $j+3$. Updates to the node are terminated in intervals $j+1$ and $j+2$.}
\label{fig:nonp}
\negfigspace
\end{figure}
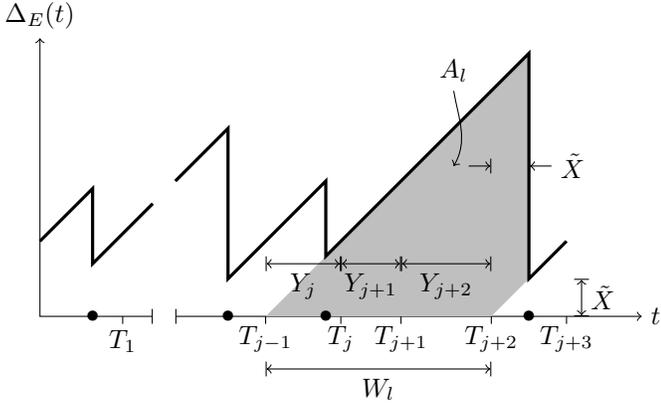

For a node in the non-priority group, the transmission of the current update is terminated right after the delivery of the update to all the $k$ nodes in the priority group.  
That is, a non-priority node $i$ fails to receive the update $j$ if and only if the service time $X_{ij}$ is larger than the service times of all the $k$ nodes in the priority group. 
Let's further denote that the priority nodes have service times $X_1, X_2, \ldots, X_k$, and the non-priority node has service time $X_{k+1}$.
For i.i.d. service time $X$, the probability that $X_{k+1}$ is the largest among all $k+1$ nodes is simply $q = {1}/(k+1)$,
since the rank of $X_{k+1}$ among $k+1$ random variables is uniform from $1$ to $k+1$. 
Here we also refer to $q$ as the failure probability for a non-priority node.

If update $j$ is not delivered to a node $i$, the node waits for a service interval $Y_j$ until the source generates the next update. 
Suppose an update is delivered to node $i$ during service interval $j$ and the next successful update delivery to node $i$ is in service interval $j+M_{l}$. 
In this case, $M_{l}$ is a geometric r.v. with probability mass function (PMF) 
$P_M(m) = q^{m-1} (1-q)$, and first and second moments
\begin{align}
\E{M} & = \frac{1}{1-q} = \frac{k+1}{k},
\label{eqn:EM} \\
\E{M^2} & = \frac{1+q}{(1-q)^2} = \frac{(k+1)(k+2)}{k^2}. \label{eqn:EM2}
\end{align}
We remark that $M_{l}$ and $Y_j$ are independent. 

An example of the age process is shown in Figure ~\ref{fig:nonp}. The update $j$ is delivered in service interval $j$ with end time $T_j$, and the node $k+1$ waits for $M_{l}=3$ service intervals until the next successful delivery in interval $j+3$. 
We represent the shaded trapezoid area as $A_{l}$ and the length in time between two service intervals with successful updates as $W_{l}=\sum_{j'=j}^{j+M_{l}-1} Y_{j'}$.
The time-averaged age for non-priority node is then
\begin{align}
\agee = \frac{\lim_{L\to\infty}\frac{1}{L} \sum_{l=1}^{L} A_{l}}{\lim_{L\to\infty}\frac{1}{L}\sum_{l=1}^{L} W_{l}} = \frac{\E{A}}{\E{W}}. \label{eqn:age_nonp_raw}
\end{align}
Denote the random variable $\Xtil_{j}$ as the service time of a successful update sent to a non-priority node with CDF $F_{\Xtil_{j}}(x) = F_{X_{j}|X_{j}<Y_j}(x)$. Evaluating Fig.~\ref{fig:nonp}, we have 
\begin{align}
	A_{l} & = \frac{1}{2} \left(W_{l} + \Xtil_{j}\right)^2 - \frac{1}{2}\Xtil_{j}^2
    = \frac{W_{l}^2}{2} + \Xtil_{j} W_{l}.\label{eqn:Ak}
\end{align}
Since $\Xtil$ and $W$ are independent, the average age in \eqref{eqn:age_nonp_raw} is
\begin{align}
	\agee & = \frac{1}{\E{W}} \Bigr( \frac{1}{2} \E{W^2} + \E{\Xtil}\E{W} \Bigr) \nn
	& = \frac{\E{W^2}}{2\E{W}} + \E{\Xtil} .  \label{eqn:EA}
\end{align}

We first define $Y_S$ as the length of a service interval given that the update is successfully delivered to the non-priority node $k+1$, and $Y_F$ as the length of a service interval given that the update is failed to be delivered. 
Thus, $Y_S$ and $Y_F$ have PDFs $f_{Y_S}(y) = f_{Y|Y\geq X_{k+1}}(y)$ and $f_{Y_F}(y) = f_{Y|Y<X_{k+1}}(y)$, respectively.
Furthermore, $\E{Y} = q\E{Y_F} + (1-q)\E{Y_S}$.

\begin{lemma}\label{thm:EW} $W$ has first and second moments
    \begin{align}
		\E{W} &= \E{M}\E{Y},\label{eqn:EW}\\
		\E{W^2} &= (\E{M}-1) \Var{Y_F} + \Var{Y_S} \nn 
		&\qquad + \left( \E{M^2}-2\E{M}+1\right) (\E{Y_F})^2 \nn
		&\qquad + (\E{Y_S})^2 + 2(\E{M}-1)\E{Y_F}\E{Y_S}. \label{eqn:EW2}
        \end{align}
\end{lemma}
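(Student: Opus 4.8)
The plan is to recognize $W$ as a random (compound) sum of service-interval lengths and then apply the standard first- and second-moment identities for such sums. First I would establish the renewal decomposition. Because the tuples $(X_{1j'},\ldots,X_{k+1,j'})$ are i.i.d.\ across service intervals $j'$, the intervals form an i.i.d.\ marked sequence: each interval is independently a \emph{success} for the non-priority node (with probability $1-q$) or a \emph{failure} (with probability $q$), and, conditioned on its mark, its length is distributed as $Y_S$ or $Y_F$, respectively. A renewal cycle runs from one successful delivery to the next, so by the definition of $M$ it comprises the initial success interval (length distributed as $Y_S$) followed by exactly $M-1$ failure intervals (each distributed as $Y_F$). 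Hence $W = Y_S + \sum_{i=1}^{M-1} Y_{F,i}$, with $Y_S$ and the $Y_{F,i}$ mutually independent.

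The step I would take most care over is the independence structure, which is what licenses the compound-sum identities. Since the marked sequence is i.i.d., conditioning on the cycle pattern (equivalently on $M=m$) leaves the interval lengths conditionally independent, the first distributed as $Y_S$ and the remaining $m-1$ i.i.d.\ as $Y_F$; in particular $Y_S$, the collection $\{Y_{F,i}\}$, and the count $M$ are independent in the required sense. Granting this, the first moment follows from linearity: $\E{W} = \E{Y_S} + (\E{M}-1)\E{Y_F}$, and substituting $\E{M} = 1/(1-q)$ together with $\E{Y} = q\E{Y_F} + (1-q)\E{Y_S}$ collapses the right-hand side to $\E{M}\E{Y}$, establishing \eqref{eqn:EW}.

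For the second moment I would write $S = \sum_{i=1}^{M-1} Y_{F,i}$ and apply the random-sum formula $\E{S^2} = (\E{M}-1)\Var{Y_F} + (\E{M^2}-2\E{M}+1)(\E{Y_F})^2$, using $\E{M-1}=\E{M}-1$ and $\E{(M-1)^2}=\E{M^2}-2\E{M}+1$. Then, expanding $\E{W^2} = \E{Y_S^2} + 2\E{Y_S}\E{S} + \E{S^2}$ with $Y_S \perp S$, $\E{S}=(\E{M}-1)\E{Y_F}$, and $\E{Y_S^2}=\Var{Y_S}+(\E{Y_S})^2$, and collecting terms reproduces \eqref{eqn:EW2} line for line. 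I expect the only genuine obstacle to be the bookkeeping that correctly assigns the $Y_S$ versus $Y_F$ conditioning to the intervals of a cycle and the clean justification that $M$ is independent of the interval lengths given the marks; once that is in place the moment evaluation is a routine application of Wald-type identities.
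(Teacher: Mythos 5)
Your proof is correct and takes essentially the same route as the paper: both decompose the cycle into one success interval (distributed as $Y_S$) and $M-1$ conditionally i.i.d.\ failure intervals (distributed as $Y_F$), independent of the count given the marks, and then evaluate the moments of the resulting compound sum. The only cosmetic difference is that the paper carries out the total-expectation sum over $M=m$ explicitly where you invoke the standard Wald/random-sum identities, and the two computations coincide term for term.
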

Proof of the lemma appears in the Appendix. Lemma~\ref{thm:EW} leads to the following result for non-priority nodes. 

\begin{theorem} \label{thm:age_nonp}
	The average age at an individual node in the non-priority group is 
	\begin{align*}
		\agee & = \frac{1}{k}\sum_{i=1}^{k}\eos{i}{k+1} + \delta_1(k) + \delta_2(k),
	\end{align*}
	where we denote
	\begin{align*}
		\delta_1(k) & = \frac{\varos{k}{k+1} + k \varos{k+1}{k+1}}{2(k+1)\eos{k}{k}}\nn
        \delta_2(k) & = \frac{\frac{k+2}{k} \eos{k}{k+1}^2 + k \eos{k+1}{k+1}^2 + 2\eos{k+1}{k+1}\,\eos{k}{k+1}}{2(k+1)\eos{k}{k}}. 
	\end{align*}
\end{theorem}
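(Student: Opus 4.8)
The plan is to begin from the age identity \eqref{eqn:EA}, $\agee = \E{W^2}/(2\E{W}) + \E{\Xtil}$, and to express every ingredient as an order statistic of the $k+1$ i.i.d.\ service times $X_1,\ldots,X_{k+1}$, where $X_{k+1}$ belongs to the non-priority node. The pivotal step, which I expect to be the conceptual crux, is to recognize $Y_S$ and $Y_F$ as order statistics of these $k+1$ variables. Because $Y=\max(X_1,\ldots,X_k)$, a delivery to node $k+1$ succeeds precisely when $X_{k+1}$ is not the largest of all $k+1$ values; in that case $Y$ is the overall maximum, so $Y_S = X_{k+1:k+1}$, giving $\E{Y_S}=\eos{k+1}{k+1}$ and $\Var{Y_S}=\varos{k+1}{k+1}$. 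A failure occurs exactly when $X_{k+1}$ is the largest, whence $Y$ is the second largest and $Y_F = X_{k:k+1}$, giving $\E{Y_F}=\eos{k}{k+1}$ and $\Var{Y_F}=\varos{k}{k+1}$.

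I would treat the additive term $\E{\Xtil}$ by the same exchangeability argument. Since $\Xtil$ is $X_{k+1}$ conditioned on not being the maximum of the $k+1$ i.i.d.\ variables, its rank is uniform on $\{1,\ldots,k\}$, and conditioned on rank $r$ it is distributed as $X_{r:k+1}$. Averaging over the rank yields
\begin{equation*}
\E{\Xtil} = \frac{1}{k}\sum_{i=1}^{k}\eos{i}{k+1},
\end{equation*}
which is exactly the leading term of the claimed expression.

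It then remains to build $\E{W^2}/(2\E{W})$ from Lemma~\ref{thm:EW}. Using $\E{Y}=\eos{k}{k}$ and the geometric moments \eqref{eqn:EM} and \eqref{eqn:EM2}, I would first reduce the $M$-dependent coefficients to $\E{M}-1 = 1/k$ and $\E{M^2}-2\E{M}+1 = (k+2)/k^2$, and then substitute these together with the order-statistic moments above into \eqref{eqn:EW2}. Since \eqref{eqn:EW} gives $\E{W}=\E{M}\E{Y}=\tfrac{k+1}{k}\eos{k}{k}$, the prefactor becomes $1/(2\E{W}) = k/[2(k+1)\eos{k}{k}]$.

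The final step is pure bookkeeping: I would split $\E{W^2}$ into its variance contributions and its squared-mean contributions. Multiplying the variance part $\tfrac{1}{k}\varos{k}{k+1}+\varos{k+1}{k+1}$ by the prefactor collapses it to $\delta_1(k)$, while the terms $\tfrac{k+2}{k^2}\eos{k}{k+1}^2 + \eos{k+1}{k+1}^2 + \tfrac{2}{k}\eos{k}{k+1}\eos{k+1}{k+1}$ collapse to $\delta_2(k)$. Adding $\E{\Xtil}$ recovers the stated formula. I anticipate no real obstacle beyond careful algebra; the entire difficulty sits in the order-statistic identification of $Y_S$, $Y_F$, and $\Xtil$, after which the result follows by direct substitution into \eqref{eqn:EA}.
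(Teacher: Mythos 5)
Your proposal is correct and follows essentially the same route as the paper: start from \eqref{eqn:EA}, compute $\E{\Xtil}=\frac{1}{k}\sum_{i=1}^{k}\eos{i}{k+1}$ via the uniform-rank argument, and substitute Lemma~\ref{thm:EW} together with the moments of $M$. The only difference is that you make explicit the identifications $Y_S = X_{k+1:k+1}$ and $Y_F = X_{k:k+1}$ (justified by exchangeability), which the paper uses implicitly when passing from the lemma to the stated formula.
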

\begin{proof}
	In \eqref{eqn:EA}, $\Xtil$ indicates the service time of a non-priority node $k+1$ given that $X_{k+1}<\max(X_1,\ldots,X_k)$. 
	This condition implies $X_{k+1}$ cannot be the largest among all $k+1$ nodes.
	Thus,
	\begin{align}
		\E{\Xtil} & = \E{X_{k+1}\, |\,X_{k+1}<X_{k+1:k+1}} 
       = \frac{1}{k} \sum_{i=1}^{k} \eos{i}{k+1} . \label{eqn:xtil}
	\end{align}
	The claim follows by substituting \eqref{eqn:EW}, \eqref{eqn:EW2} and \eqref{eqn:xtil} back into \eqref{eqn:EA}, and replacing  $\E{M}$ and $\E{M^2}$ by \eqref{eqn:EM} and \eqref{eqn:EM2}.	
\end{proof}
For exponential service times, Theorems~\ref{thm:age_p} and \ref{thm:age_nonp} yield the next claim.
\begin{theorem} \label{thm:expsame}
	For exponential service time $X$, the average age is the same for both priority and non-priority nodes and is given by 
	\begin{align}
		\agee = \agep = \frac{1}{\lambda} + \frac{H_k}{2\lambda} + \frac{H_{k^2}}{2\lambda H_k}, \label{eqn:expsame}
	\end{align}
	where $k$ is the priority group size. 
\end{theorem}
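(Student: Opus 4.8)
The plan is to observe that an exponential service time is precisely the shifted-exponential special case $c=0$, then evaluate both Theorem~\ref{thm:age_p} and Theorem~\ref{thm:age_nonp} at the exponential order-statistic moments and verify that the two expressions collapse to the common value in \eqref{eqn:expsame}. Setting $c=0$ in \eqref{os_both} and using $H_0=H_{0^2}=0$, the moments I will need simplify to $\mu=1/\lambda$, $\eos{k}{k}=H_k/\lambda$, $\eos{k}{k+1}=(H_{k+1}-1)/\lambda$, $\eos{k+1}{k+1}=H_{k+1}/\lambda$, $\varos{k}{k}=H_{k^2}/\lambda^2$, $\varos{k}{k+1}=(H_{(k+1)^2}-1)/\lambda^2$, and $\varos{k+1}{k+1}=H_{(k+1)^2}/\lambda^2$, together with the two recursions $H_{k+1}=H_k+1/(k+1)$ and $H_{(k+1)^2}=H_{k^2}+1/(k+1)^2$.

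First I would dispatch the priority side, which is immediate: substituting $\mu=1/\lambda$, $\eos{k}{k}=H_k/\lambda$ and $\varos{k}{k}=H_{k^2}/\lambda^2$ into Theorem~\ref{thm:age_p} reproduces the right-hand side of \eqref{eqn:expsame} term by term. It then remains only to show that the non-priority expression of Theorem~\ref{thm:age_nonp} reduces to the very same value.

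For the non-priority side I would substitute the moments above into the three pieces of Theorem~\ref{thm:age_nonp}. The leading sum calls for the harmonic identity $\sum_{m=1}^{k}H_m=(k+1)H_k-k$, which yields $\frac{1}{k}\sum_{i=1}^{k}\eos{i}{k+1}=\frac{1}{\lambda}\bigl(\tfrac{k+2}{k+1}-\tfrac{H_k}{k}\bigr)$. After clearing the common denominator $2(k+1)\eos{k}{k}=2(k+1)H_k/\lambda$, the numerators of $\delta_1(k)$ and $\delta_2(k)$ become low-degree polynomials in $H_{k+1}$ (quadratic) and $H_{(k+1)^2}$ (linear); using the two recursions to rewrite everything in terms of $H_k$ and $H_{k^2}$ should make the stray $-H_k/k$ and $\tfrac{k+2}{k+1}$ contributions from the leading sum cancel against matching terms coming out of $\delta_1+\delta_2$, leaving exactly $\tfrac{1}{\lambda}+\tfrac{H_k}{2\lambda}+\tfrac{H_{k^2}}{2\lambda H_k}$.

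I expect the main obstacle to be purely this last bookkeeping step: the non-priority formula is a ratio of degree-two expressions in the order-statistic moments, and one must track several partial cancellations — in particular, the $1/(k+1)$ and $1/(k+1)^2$ tails introduced by $\eos{\cdot}{k+1}$ and $\varos{\cdot}{k+1}$ must recombine to rebuild precisely the $n$-independent priority value. A useful safeguard before committing to the general identity is to confirm the claim at $k=1$, where both $\agep$ and $\agee$ evaluate to $2/\lambda$; I would check a couple of such small cases to catch any sign or index slip before grinding through the symbolic simplification.
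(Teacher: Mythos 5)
Your proposal is correct and follows essentially the same route as the paper: set $c=0$ in \eqref{os_both}, note the priority formula of Theorem~\ref{thm:age_p} immediately gives \eqref{eqn:expsame}, and reduce the non-priority expression using the harmonic-sum identity $\sum_{i=1}^{k}H_i=(k+1)H_k-k$ (the paper writes it as $(k+1)(H_{k+1}-1)$) together with $H_{k+1}=H_k+1/(k+1)$; your stated intermediate value for $\frac{1}{k}\sum_{i=1}^{k}\eos{i}{k+1}$ agrees with the paper's \eqref{eqn:delta_0}. The only difference is that you defer the final cancellation bookkeeping, which the paper carries out by grouping $\delta_0+\delta_2$ before adding $\delta_1$.
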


Theorem~\ref{thm:expsame} implies that the average age is identical for both groups regardless of whether an update is delivered to a node or not.

\section{Evaluation} \label{sec:evaluation}

\begin{figure}[t]
\centering

\begin{subfigure}[b]{0.5\textwidth}
\centering
\includegraphics[width=\textwidth]{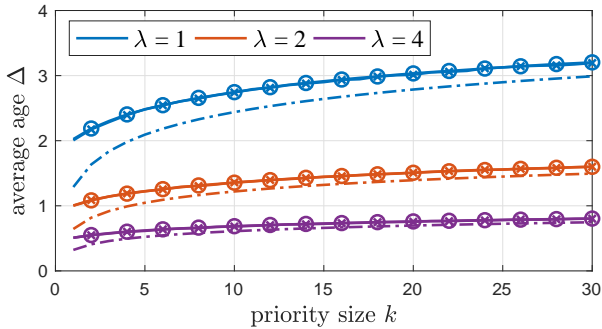}
\caption{exponential $X$. }
\label{fig:exp}
\end{subfigure}

\begin{subfigure}[b]{0.5\textwidth}
\centering
\includegraphics[width=\textwidth]{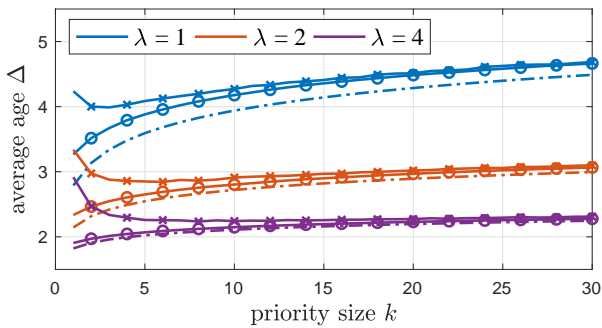}
\caption{shifted exponential $X$ with $\cf=1$. }
\label{fig:shiftexp}
\end{subfigure}
\caption{Average age versus the priority group size $k$. circle $\circ$ marks the priority group and cross $\times$ marks the non-priority group. The lower bound for priority group is shown as dashed line.}
\end{figure}


Figures~\ref{fig:exp} and~\ref{fig:shiftexp} compare the simulation results of the average age for the priority group $\agep$ and the non-priority group $\agee$ as a function of the priority group size $k$. 
In Fig. \ref{fig:exp}, the link delay to every node $i$ is exponentially distributed with different $\lambda$. 
The average age curves for both groups overlap with each other and increases monotonically, which matches Theorem
 \ref{thm:expsame}. 
The lower bound on the average age for the priority group in Corollary \ref{thm:lowage_p} captures the trend for varying $k$, and becomes tighter for sufficiently large $k$. 
Fig. \ref{fig:shiftexp} shows the similar result for shifted exponential delay with $\cf=1$. 
For small group size $k$, there is a significant difference between the average age for two groups.
As $k$ increases, the age for non-priority group $\agee$ decreases slightly in the beginning and climbs up after a certain $k$. 
We also observe that the age difference between two groups vanishes for large enough $k$. 

\begin{figure}[t]
	\centering
	\includegraphics[width=0.5\textwidth]{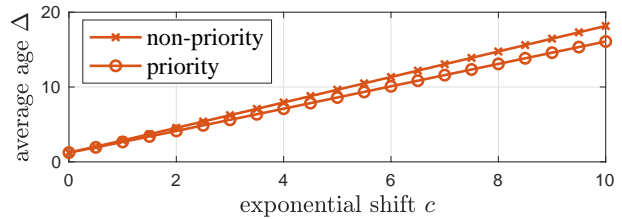}
	\caption{Average age versus the exponential shift $\cf$ with $\lambda=2$. The priority group size $k=5$.}
	\label{fig:shift}
\end{figure}

Fig. \ref{fig:shift} depicts the average age as a function of the shift parameter $\cf$ for shifted exponential delay $X$.
In Fig. \ref{fig:shift} with exponential rate $\lambda=2$, both groups have almost linear increasing average age for different the constant shift $\cf$.
The two curves start at the same point for $\cf=0$, and the difference in slopes leads to a larger gap between two curves as $\cf$ increases.

\section{Conclusion} \label{sec:conclusion}

In this work, we examine a status updating multicast network where the receivers are prioritized in terms of packet deliveries.
The average age at each receiver depends on the order statistics of the random link service time.
If the service time is identically distributed as exponential for every link, we show analytically and numerically that the average age at a priority node with packet delivery guarantee is the same as that without the guarantee.
The difference between two types of nodes arises if the exponential service time is mixed with a non-zero constant time shift. 
The analysis in this work is limited to exponential class service time, but we believe the difference between two types of nodes is related to the hazard rate of the service distribution, which potentially determines when the source should preempt the service of the current update with a fresh update.
\appendix 
\begin{proof}{\em Corollary~\ref{cor:harmonics}}
	Substituting (\ref{os_first}) and (\ref{os_var}) into Theorem~\ref{thm:age_p} gives 
	\begin{align}
		\agep & = \frac{3c}{2} + \frac{1}{\lambda} + \frac{H_k}{2\lambda} + \frac{H_{k^2}}{2\lambda^2\cf+2\lambda H_k}. \label{eqn:expage_p}
	\end{align}
	Note that $H_{k^2} = \sum_{i=1}^{k} \frac{1}{k^2}$ is monotonically increasing for $n\in\mathbb{Z^+}$ and $\lim_{k\to\infty} H_{k^2}=\pi^2/6$. Thus, given $\lambda$ and $\cf$,
	\begin{align}
		\lim_{k\to\infty}~\frac{H_{k^2}}{2\lambda^2\cf+2\lambda H_k}~\downarrow ~0. \label{eqn:low1}
	\end{align}
The harmonic number is given asymptotically by $H_k\approx\log k+\gamma+O(\frac{1}{k})$, which can be lower bounded by
\begin{align}
	H_k\geq\log k+\gamma, \qquad \textrm{for } k \in \mathbb{Z}_{>0}. \label{eqn:low2}
\end{align}	
Thus, \eqref{eqn:lowage_p} is given by substituting \eqref{eqn:low1} and \eqref{eqn:low2} into \eqref{eqn:expage_p}.
\end{proof}
\begin{proof}{\em Lemma~\ref{thm:EW}}
	We note the sequence $Y_j, \ldots, Y_{j+M_l-1}$ and the number of summation terms $M_l$ are dependent. 
	Since $M_l$ is geometric, the event $M_l=m$ indicates a sequence of $m-1$ consecutive failures followed by a success. Thus, $Y_{j'}$ is identical to $Y_F$ for $j'=\in\set{j,\ldots,j+M_l-2}$ and the last variable in the sequence $Y_{j+M_l-1}$ is identical to $Y_S$. This implies
	\begin{align}
		\E{W} & = \sum_{m=1}^{\infty} P_M(m) \E{\sum_{i=1}^{m}Y_i \Big| M=m} \nn
		& = \sum_{m=1}^{\infty} P_M(m) \Bigl( (m-1)\E{Y_F}+\E{Y_S} \Bigl) \nn
		& = \E{Y_F} (\E{M}-1) + \E{Y_S}. \label{eqn:EW_YFYS}
	\end{align}
	Substituting \eqref{eqn:EM} into \eqref{eqn:EW_YFYS} yields
	\begin{align}
		\E{W} 
		&= \frac{k+1}{k} \left( \frac{1}{k+1} \E{Y_F} + \frac{k}{k+1} \E{Y_S} \right)\nn
		&= \E{M} \E{Y}. 
	\end{align}	
For the second moment, we write $\E{W^2}$ in total expectation as
	\begin{align}
		\E{W^2} & = \sum_{m=1}^{\infty} P_M(m) \E{\Big(\sum_{i=1}^{m}Y_i \Big)^2 \Big| M=m} \nn
		& = \sum_{m=1}^{\infty} P_M(m) \left( \Var{\sum_{i=1}^{m}Y_i} + \left( \E{\sum_{i=1}^{m}Y_i } \right)^2 \right). \label{eqn:omega}
	\end{align}
	Since the random variables $Y_i$ are independent, we let
	\begin{align}
		\omega_1 & = \sum_{m=1}^{\infty} P_M(m) \Var{\sum_{i=1}^{m}Y_i} \nn
		& = \sum_{m=1}^{\infty} P_M(m) \Bigr( (m-1) \Var{Y_F} + \Var{Y_S} \Bigr) \nn
		& = \Var{Y_F} \Big(\E{M}-1\Big) + \Var{Y_S}. \label{eqn:omega1}
	\end{align}
	Similarly, we have  
	\begin{align}
		\omega_2 & = \sum_{m=1}^{\infty} P_M(m) \left( \E{\sum_{i=1}^{m}Y_i } \right)^2 \nn
		& = \sum_{m=1}^{\infty} P_M(m) \Bigr((m-1)\E{Y_F} + \E{Y_S}\Bigr)^2 \nn
		& = \left( \E{M^2}-2\E{M}+1\right) (\E{Y_F})^2 \nn
		&\qquad  + 2(\E{M}-1)\E{Y_F}\E{Y_S} + (\E{Y_S})^2. \label{eqn:omega2}
	\end{align}
The claim follows by substituting \eqref{eqn:omega1} and \eqref{eqn:omega2} in \eqref{eqn:omega}. 
\end{proof}

\begin{proof}{\em Theorem~\ref{thm:expsame}}
	For priority nodes, we obtain the average age by substituting \eqref{os_first} and \eqref{os_var} to Theorem \ref{thm:age_p} with $c=0$, which directly yields \eqref{eqn:expsame}. For non-priority nodes, the first term in Theorem \ref{thm:age_nonp} is 
	\begin{subequations}
	\begin{align}
		\delta_0(k) & = \frac{1}{k}\sum_{i=1}^{k}\eos{i}{k+1} \\
		& = \frac{1}{k}\sum_{i=1}^{k} \frac{H_{k+1}-H_{k+1-i}}{\lambda} \\
		& = \frac{H_{k+1}}{\lambda} - \frac{1}{\lambda k} \sum_{i=1}^{k} H_i \\
		& = \frac{H_{k+1}}{\lambda} - \frac{k+1}{\lambda k} (H_{k+1}-1) \label{eqn:sumharmonic} \\
		& = \frac{1}{\lambda} + \frac{1}{\lambda k} - \frac{H_{k+1}}{\lambda k}. \label{eqn:delta_0}
	\end{align}\end{subequations}	
	In \eqref{eqn:sumharmonic}, we use the series identity of Harmonic numbers $\sum_{i=1}^{k} H_i = (k+1) (H_{k+1}-1)$.
	Similarly, substituting \eqref{os_first} and \eqref{os_var} into $\delta_1(k)$ and $\delta_2(k)$ gives
	\begin{align}
		\delta_1 (k)
		& = \frac{(H_{(k+1)^2}-1) + kH_{(k+1)^2}}{2(k+1)\lambda H_k} \nn
		& = \frac{H_{k^2}}{2\lambda H_k} - \frac{k}{2\lambda(k+1)^2 H_k}, \label{eqn:delta_1} \\
		\delta_2 (k) 
		& = \frac{k+2}{2k(k+1)} \frac{(H_{k+1}-1)^2}{\lambda H_k}  + \frac{k}{2(k+1)} \frac{H_{k+1}^2}{\lambda H_k} \nn
		& \qquad + \frac{1}{k+1} \frac{(H_{k+1}-1)H_{k+1}}{\lambda H_k} \label{eqn:delta_2} .
	\end{align}
	We note that $\delta_0(k)$ in \eqref{eqn:delta_0} and $\delta_2(k)$ in \eqref{eqn:delta_2} only contain first order harmonic numbers, thus we combine two terms and rewrite $H_{k+1} = H_k + 1/(1+k)$, which gives
	\begin{align}
		\delta_0(k)  + \delta_2 (k)  = \frac{1}{\lambda} + \frac{H_k}{2\lambda} + \frac{k}{2\lambda(k+1)^2 H_k} \label{eqn:delta_sum}
	\end{align}
	The claim is given by the sum of \eqref{eqn:delta_1} and \eqref{eqn:delta_sum}.
\end{proof}


\section*{Acknowledgment}
\addcontentsline{toc}{section}{Acknowledgment}
Part of this research is based upon work supported by the National Science Foundation under grants CNS-1422988 and and CCF-1717041.

\bibliographystyle{IEEEtran}
\bibliography{ref}

\end{document}